\DeclareMathOperator{\poly}{\textnormal{\textsf{poly}}}
\DeclareMathOperator{\polylog}{\textnormal{\textsf{polylog}}}
\renewcommand{\epsilon}{\varepsilon} 
\newcommand{\UU}{\mathcal{U}}
\newcommand{\calP}{\mathcal{P}}
\newcommand{\calQ}{\mathcal{Q}}
\newcommand{\calI}{\mathcal{I}}
\newcommand{\calT}{\mathcal{T}}
\newcommand{\calF}{\mathcal{F}}
\newcommand{\YES}{\textsf{YES}}
\newcommand{\LB}{\textsf{LB}}
\renewcommand{\epsilon}{\varepsilon}
\newcommand{\eps}{\varepsilon}
\newcommand{\size}[1]{\ensuremath{\left|#1\right|}}
\newcommand{\set}[1]{\left\{ #1 \right\}}
\newcommand{\parentheses}[1]{\left(#1\right)}
\newcommand{\expectation}[2]{\mathbb{E}_{#1}\left[ #2 \right]}
\newcommand{\variance}[2]{\text{Var}_{#1}\left[ #2 \right]}
\renewcommand{\Pr}[1]{{\mathrm{Pr}}\left[ #1 \right]}
\newcommand{\Prr}[2]{\mathrm{Pr}_{#1}\left[ #2 \right]}
\newcommand{\ceil}[1]{\left\lceil {#1} \right\rceil}
\newcommand{\indicator}[1]{\left[ {#1} \right]}
\newif\ifnotes
\newcommand{\ioana}[1]{{\ifnotes \scriptsize \textcolor{red}{Ioana: {#1}} \fi}}
\newcommand{\jakob}[1]{{\ifnotes \scriptsize \textcolor{purple}{Jakob: {#1}} \fi}}
\newcommand{\rasmus}[1]{{\ifnotes \scriptsize  \textcolor{blue} {Rasmus: {#1}} \fi}}
\newcommand{\ioana}[1]{}
\newcommand{\jakob}[1]{}
\newcommand{\rasmus}[1]{}
\title{Daisy Bloom Filters}
\author{Ioana O. Bercea}{KTH Royal Institute of Technology, Stockholm, Sweden}{bercea@kth.se}{https://orcid.org/0000-0001-8430-2441}{}
\author{Jakob Bæk Tejs Houen}{BARC, University of Copenhagen, Denmark}{jakob@tejs.dk}{https://orcid.org/0000-0002-8033-2130}{}
\author{Rasmus Pagh}{BARC, University of Copenhagen, Denmark}{pagh@di.ku.dk}{https://orcid.org/0000-0002-1516-9306}{}
\authorrunning{I.\,O. Bercea, J.\,B.\,T. Houen, R. Pagh} 
\keywords{Bloom filters, input distribution, learned data structures} 
\begin{document}
	\nolinenumbers
	
	\pagenumbering{roman}
	\maketitle
	
	\begin{abstract}
		A filter  is a widely used data structure for storing an approximation of a given set $S$ of elements from some universe $\UU$ (a countable set).
		It represents a superset $S'\supseteq S$ that is ``close to~$S$'' in the sense that for $x\not\in S$, the probability that $x\in S'$ is bounded by some $\eps > 0$. The advantage of using a Bloom filter, when some false positives are acceptable, is that the space usage becomes smaller than what is required to store $S$ exactly. 
		
		Though filters are well-understood from a worst-case perspective, it is clear that state-of-the-art constructions may not be close to optimal for particular distributions of data and queries. Suppose, for instance, that some elements are in $S$ with probability close to~1. Then it would make sense to always include them in $S'$, saving space by not having to represent these elements in the filter. Questions like this have been raised in the context of Weighted Bloom filters  (Bruck, Gao and Jiang, ISIT 2006) and Bloom filter implementations that make use of access to learned components (Vaidya, Knorr, Mitzenmacher, and Krask, ICLR 2021).
		
		In this paper, we present a lower bound for the expected space that such a filter requires. We also show that the lower bound is asymptotically tight  by exhibiting a filter construction that executes queries and insertions in worst-case constant time,  and has a false positive rate at most $\eps$ with high probability over input sets drawn from a product distribution. We also present a Bloom filter alternative, which we call the \emph{Daisy Bloom filter}, that executes operations faster and uses significantly less space than the standard Bloom filter.  
	\end{abstract}
	
	\newpage
	\pagenumbering{arabic}

	\section{Introduction}
	This paper shows asymptotically matching upper and lower bounds for the space of an optimal (Bloom) filter when the input and queries come from specific distributions. For a set $S$ of keys (the input set), a filter on $S$ with parameter $\epsilon \in (0,1)$ is a data structure that answers membership queries of the form ``is x in S?'' with a one-sided error: if $x \in S$, then the filter always answers \textsf{YES}, otherwise it makes a mistake (i.e., a false positive) with probability at most $\epsilon$.  The Bloom filter~\cite{bloom1970space} is the most widely known such filter, although more efficient constructions are known~\cite{bender2011don, fan2014cuckoo, bender2018bloom,BerceaE20,BerceaE21, dietzfelbinger2008succinct, liu2020succinct, DBLP:conf/soda/PaghPR05, pagh2013approximate, porat2009optimal}. Filters are also intimately related to dictionaries (or hash tables), the latter of which always answer membership queries exactly.
	
	When errors can be tolerated, (Bloom) filters are much better than dictionaries at encoding the input set: they require $ \Theta( n\log(1/\epsilon))$ bits to represent a set of size $n$, versus the $\geq n\log(u/n)$ bits that a dictionary would require (here, $u$ is the size of the universe). As such, filters are often used in conjunction with dictionaries to speed up negative queries. In particular, filters are often stored in a fast but small memory and are used to ``filter out'' a majority of negative queries to a dictionary (which might reside in big but slow memory).  Because of this, they have proved to be extremely popular in practice and research on them continues to this day, both in the direction of practical implementations~\cite{DBLP:conf/sigmod/PandeyCDBFJ21, DBLP:journals/pvldb/EvenEM22,  DBLP:journals/pacmmod/DayanBRP23} and on the theoretical front~\cite{ liu2020succinct,BerceaE20,BenderFKKL22}. For instance, recent advances in filter design have included making them dynamic, resizeable and lowering the overall space that they require.

	\subparagraph*{The filter encoding.}  In this paper, we ask ourselves what should optimal filters look like when they encode sets that come from a specific distribution. While this question has been resolved for exact encodings (i.e., entropy), no similar concepts are known for filter encodings. Indeed, considering input distributions raises several technical questions. For instance, it is not even clear how to define the concept of approximate membership with respect to a set drawn from a distribution. Should we assume that the input set is given to us in full before we build our filter and allocate memory? 	Moreover, we would like to obtain designs that are never worse than filters with no knowledge of the input distribution, both in space allocated and time required to perform every operation. Should we then require that the false positive guarantee hold for every possible input set or just on average over  the input distribution?

	We also study optimality when additionally, we have access to a distribution over queries. 
	This is especially important for applications in which the performance of the filter is measured over a sequence of queries, rather than for each query separately~\cite{fan2000summary, broder2003network}. 
	At the extreme end of this one can consider adversarial settings, in which an adversary forces the filter to incur many false positives (which can cause a delay in the system by forcing the filter to repeatedly access the slow dictionary). 
	In these settings, defining what it means for the filter to behave efficiently can be a challenge and several definitions have been considered~\cite{naor2015bloom, mitzenmacher2018adaptive, bender2018bloom, naor2022bet}. 
	For us, the challenge is to use the query distribution to obtain gains, while making sure that the filter does not on average exhibit more false positives than usual.
	This is natural when each false positive has the same cost, independent of the query element.

	To this end,  we consider a natural generative model of input sets and queries. Specifically, we let $\calP$ and $\calQ$ denote two distributions over the universe $\UU$ of keys and let $p_x$ (and $q_x$, respectively) denote the probability that a specific key $x\in\UU$ is sampled from $\calP$ (and $\calQ$, respectively). 
	The input set $S$ is generated by $n$ independent draws (with replacement) from $\mathcal{P}$ and we let $\calP_n$ denote this product distribution.\footnote{We do not consider multiplicities although our design can be made to handle them by using techniques from counting filters~\cite{DBLP:conf/soda/PaghPR05, bonomi2006improved, pandey2017general, BerceaE21}.} 
	
	We then define approximate membership for a fixed set $S$ to mean that the average false positive probability over  $\calQ$ is at most $\eps$. Specifically, let $\calF$ denote the filter and  let $\calF(S,x) \in \set{\textsf{YES},\textsf{NO}}$ denote the answer that $\calF$ returns when queried on an element $x\in\UU$, after having been given $S\subseteq \UU$ as input. Then we propose the following definition:
	
	\begin{definition}
		For any $\eps$ with $0<\eps<1$, we say that $\calF$ is a \emph{$(\calQ,\eps)$-filter for $S$} if it satisfies the following conditions:
		\begin{enumerate}
			\item No false negatives: For all $x\in S$, we have that $\Pr{\calF(S,x) = \textsf{YES}} = 1$.
			\item Bounded false positive rate: 
			\begin{center}
				$\sum_{x\in \UU\setminus S} q_x \cdot \Pr{\calF(S,x) = \textsf{YES}} \leq \eps$
			\end{center}
			
		\end{enumerate}
	\end{definition}
	
	We note a detail in the above definition that has important technical consequences and that is, the false positive rate is not computed with respect to the input distribution (i.e., the probability of a false positive only depends on the internal randomness of the filter and not the random process of drawing the input set).  
	As a consequence, we can argue about filter designs that work over  all input sets except some that occur very rarely under $\calP_n$. This is stronger than saying that $\calF$ works only on average over $\calP_n$. Moreover,  we also want designs that do not require knowing the specific realization of the input set in advance. Our dependency on $\calP_n$ shows up in the space requirements of the filter.
	
	\subparagraph*{Access to $\calP$ and $\calQ$. }  For simplicity, we consider filter designs that have oracle access to $\calP$ and $\calQ$: upon seeing a key $x$, we also get $p_x$ and $q_x$. We assume that this is done in constant time and do not account for the size of the oracle when we bound the size of the filter.  Critics of this model have argued that assuming oracle access to a distribution over the universe is too strong of an assumption. Indeed, this is a valid concern, since we are talking about a data structure that is meant to save space over a dictionary. We try to alleviate this concern in several ways. On one hand, our construction can tolerate mistakes. In particular, our designs are robust even if we have a constant factor approximation for $p_x$ and $q_x$, in the sense in which the space increases only by $O(n)$ bits and the time to perform each operation by an added constant. The assumption of access to such approximate oracles is standard~\cite{canonne2014testing, eden2020learning} and can be based on samples of historical information, on frequency estimators such as Count-Min~\cite{cormode2005improved} or Count-Sketch~\cite{charikar2002finding}, or on machine learning models (see for instance, the neural-net based frequency predictor of Hsu et al.~\cite{hsu2019learning}). This view is indeed part of an emerging body of work on algorithms with predictions, to which the data structure perspective is just beginning to contribute~\cite{NEURIPS2018_0f49c89d, cao2023learningaugmented, ferragina2020pgm,FerraginaV19, pmlr-v119-ferragina20a, DBLP:conf/esa/FerraginaL0V23, vaidya2021partitioned, dai2020adaptive, mccauley2023online}.
	
	On the other hand, empirical studies have shown that significant gains are possible even when using off-the-shelf, ``simplistic'' learned components such as random forest classifiers. In particular,   the Partitioned Learned Bloom Filter~\cite{vaidya2021partitioned} and the Adaptive Learned Bloom Filter~\cite{dai2020adaptive} consider settings in which the size of the learned component is comparable to the size of the filter itself (rather than proportional to the size of the universe), and compare the traditional Bloom filter design~\cite{bloom1970space} with a learned design whose space includes the random forest classifier. In one experiment with a universe of $\approx 138,000$ keys and a classifier of $136$Kb,~\cite{dai2020adaptive}  show that, within the range $150$-$300$Kb, there is a $98\%$ decrease in false positive rate compared to the original Bloom filter. This continues to hold for larger universe ($\approx$ $450,000$ keys) with total allocated space between $200$Kb and $1000$Kb. A discussion of how our current (theoretical) design compares to the ones in~\cite{dai2020adaptive} and~\cite{vaidya2021partitioned} can be found in Sec.~\ref{sec:related}.

	Finally, strictly speaking, our designs do not necessarily rely on knowing $p_x$ and $q_x$ for every element inserted or queried. As we will see in the next section, our designs depend rather on knowing which subset of the universe a key $x$ belongs to. This corresponds to a partitioning of the universe that mainly depends on the ratio $q_x/p_x$, rather than the individual values of $p_x$ and $q_x$ (with the exception of values of $p_x$ and $q_x$ that are very small, e.g., smaller than $1/n$). This can conceivably lead to even smaller oracles that just output the partition to which an element belongs. We also do not need to query the entire universe in order to set the internal parameters of the filter, in contrast to~\cite{dai2020adaptive,vaidya2021partitioned}.
	
	\subparagraph*{Weighted Bloom filters.} 
	The design that we propose starts by gathering information about the input and query distributions, using $\polylog(n)$ samples.\footnote{Elements that are inserted in the set during that time can be stored in a small dictionary that only requires $\polylog(n)$ bits, see Sec.~\ref{sec:estimate}.} This information is used  to estimate the internal parameters of the filter which are then used to allocate space for the filter and implement the query and insert operations. Thus, the most important aspect of our design is in setting the aforementioned internal parameters. 
	
	As a baseline for comparison, we can consider the classic Bloom filter design which allocates an array of $\approx 1.44 \cdot n  \log(1/\eps)$ bits and hashes every key to $\log(1/\eps)$ locations in the array. Upon insertion, the corresponding bits are set to 1 and a query returns a YES if and only if all locations are set to 1.  The more locations we hash into, the lower the probability that we make a mistake. Thus, a natural approach for our problem would be to vary the number of hashed locations of $x$ based on $p_x$ and $q_x$. Indeed, this is the question investigated by Bruck, Gao and Jiang~\cite{BruckGJ06weighted} in their Weighted Bloom filter design. More precisely, let $k_x$ denote the number of locations that key $x$ is hashed to. Then~\cite{BruckGJ06weighted} investigates the  optimal choice of the parameters $k_x$ that limits the false positive rate in expectation over both the input and the query distribution. Their approach follows the original Bloom filter analysis and casts the problem as an unconstrained optimization problem in which $k_x$ is allowed to be any real number (including negative). For more details, we refer the reader to Sec.~\ref{sec:related} This formulation and the fact that their false positive rate is taken as an average over $\calP_n$ leads to situations in which $k_x$ can be made arbitrarily large and, with high probability, the filter is filled with $1$s and has a high false positive probability (for instance, when a key is queried very rarely).  To avoid such situations, as we shall see next, optimal choices for $k_x$  exhibit some rather counter-intuitive trade-offs between $p_x$ and $q_x$.

	\subsection{Our Contributions}

	We start by discussing near-optimal choices for $k_x$ for a Weighted Bloom filter that is a  $(\calQ,\eps)$-filter for sets drawn from $\calP_n$. While this filter is not the most efficient of the filters we construct, reasoning through it helps us present our parametrizations and addresses the fact that Bloom filters remain well-liked in practice~\cite{DBLP:journals/comsur/LuoGMRL19}. Specifically,  we define $k_x$ as follows:\footnote{Throughout the paper, we employ $\log x$ to denote $\log_2 x$ and $\ln x$ to denote $\log_e x$.}
	
	\begin{align*}
		k_x&\triangleq
		\begin{cases}
			0 						& \text{if or $p_x > 1/n$ or $ q_x \leq \eps p_x $ } \; , \\
			\log(1/\eps \cdot q_x/p_x) & \text{if $\eps p_x < q_x \leq \min\!\set{p_x,\eps/n}$} \; , \\
			\log(1/\eps) 				& \text{if $q_x> p_x$ and $p_x \leq \eps/n $} \; , \\
			\log(1/(np_x)) 			& \text{if $q_x > \eps/n$ and $\eps/n < p_x \le 1/n$} \; .
		\end{cases}
	\end{align*}
	
	\begin{figure}[t]
		\centering
		\includegraphics[scale=0.5]{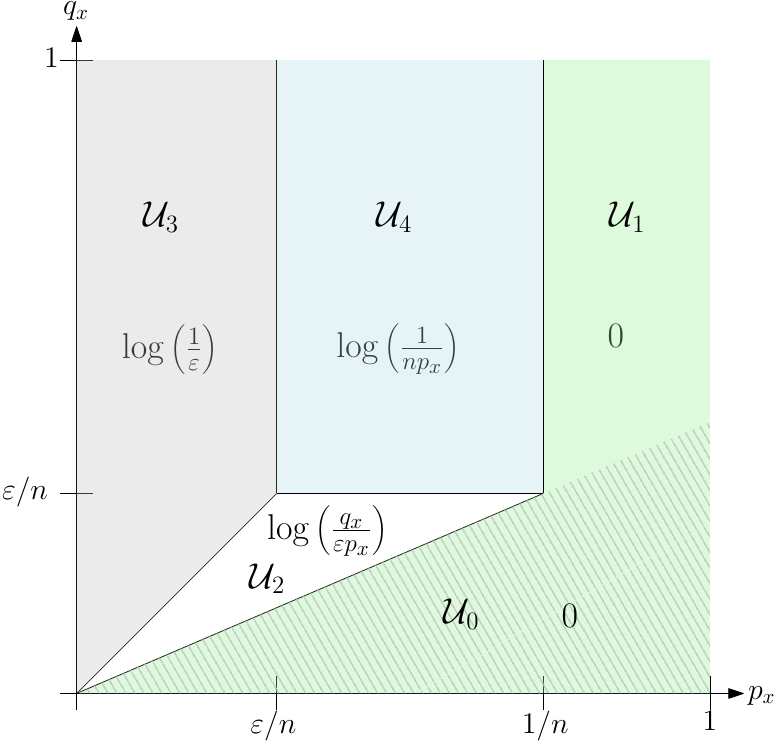}
		\caption{A schematic visualization of the different regimes for $k_x$.}
		\label{fig:domains}
	\end{figure}
	
	The first case covers the situation in which $x$ is very likely to be included in the set or is queried very rarely (relative to $p_x$). Intuitively, it makes sense in these cases to always say \text{YES} when queried.  Thus, we set $k_x=0$ and store no information about these keys. Conversely, the third case considers the case in which $x$ is queried so often (relative to $p_x$) that we need to explicitly keep the false positive probability below $\eps$, which is achieved by setting $k_x = \log(1/\eps)$. This is the largest number of hash functions we employ for any key, so in this sense, we are never worse than the classical Bloom filter. The second case interpolates smoothly between the first and third cases for elements that are rarely (but not very rarely) queried (compared to how likely they are to be inserted). Finally, the fourth case interpolates between the first and third case for elements that are not too rarely queried, in which case the precise query probability does not matter. See Fig.~\ref{fig:domains} for a visualization of these regimes.

	To further make sense of these regimes, we consider the case of uniform queries, i.e., $q_x = 1/u$, and assume that $\eps > n/u$, a standard assumption in filter design (otherwise, the filter would essentially have to answer correctly  on all queries and the lower bound of $n\log_2(1/\eps)-O(1)$ would not hold~\cite{carter1978exact,dietzfelbinger2008succinct}). Then in the two extremes, we would set $k_x=0$ for elements with $p_x \geq 1/(u\eps)$ (first case) and $k_x = \log(1/\eps)$ when $p_x \leq 1/u$ (third case). Keys  with $p_x$ between the two cases would exhibit the smooth interpolation $k_x = \log(1/(u\eps) \cdot 1/p_x)$, corresponding to the intuition that the more likely an element is to be inserted, the less information we should store about it (i.e., smaller $k_x$).

	\subparagraph*{The lower bound.} Given the above parameters, we then define the quantity  
	\begin{center}
		$\LB(\calP_n,\calQ, \eps) \triangleq \sum_{x \in \UU} p_x k_x$
	\end{center}  and show that, perhaps surprisingly, it gives a lower bound for the expected space that any $(\calQ,\eps)$-filter requires when the input set is drawn from $\calP_n$:
	
	\begin{theorem}[Lower bound - simplified]\label{thm:lbinf}
		Let $A$ be an algorithm and assume that for any input set $S \subseteq \UU$ with $\size{S} \le n$, $A(S)$ is a $(\calQ,\eps)$-filter for $S$. 
		Then the expected size of $A(S)$ must satisfy
		\begin{center}
			$\expectation{\calP_n, A}{\size{A(S)}} \geq \LB(\calP_n, \calQ, \eps) - 1 - 6n \; ,$
		\end{center}
		where $S$ is sampled with respect to $\calP_n$ and the queries are sampled with respect to $\calQ$.
	\end{theorem}

	Previous approaches for filter lower bounds show that there exists a set $S \subseteq U$ of size $n$ for which the filter needs to use $n\log_2(1/\eps)-O(1)$ bits~\cite{carter1978exact,dietzfelbinger2008succinct}.
	This type of  lower bound is still true in our model but it does not necessarily say anything meaningful, since the bad set $S$ could be sampled in $\calP_n$ with a negligible probability. Indeed, if we were to ignore the input distribution, then we would not be able to beat the worst input distribution and, in particular,  we would need to use at least $\sup_{\calP} \LB(\calP_n, \calQ, \eps) = \LB(\calQ_n, \calQ, \eps) = n \log(1/\eps)$ bits in expectation, where $\calQ_n$ denotes a distribution over $n$ independent draws from $\calQ$.
	
	In our model, it is therefore more natural to lower bound the \emph{expected} size of the filter over the randomness of the input set. Finally, we remark  that the full lower bound we prove is slightly stronger in that it holds for all but an unlikely collection of possible input sets, i.e. we only require that $A(S)$ is a $(\calQ,\eps)$-filter for $S \in \calT$ where $\calT \subseteq \mathbb{P}(U)$ and $\Prr{\calP_n}{S \not\in \calT} \le \frac{1}{\log u}$ (see Thm.~\ref{thm:lb}).
	
	\subparagraph*{The space-efficient filter.} We also show a filter design that asymptotically matches our space lower bound and executes operations in constant time in the worst case:
	
	\begin{theorem} [Space-efficient filter - simplified] Given $0<\eps<1$, there is a $(\calQ,\eps)$-filter with the following guarantees:
		\begin{itemize}
			\item it is a $(\calQ,\eps)$-filter with high probability over sets drawn from $\calP_n$, if $\sum_{x\in\UU} p_xq_x \leq \eps/n$,
			\item queries and insertions take constant time in the worst case,
			\item the space it requires is $(1+o_n(1))\cdot \LB(\calP_n,\calQ,\eps) + O(n)$ bits.
		\end{itemize}
	\end{theorem}
	
	The construction uses the $k_x$ values from above in conjunction with the fingerprinting technique of Carter et al.~\cite{carter1978exact} to obtain results that are comparable to state-of-the-art (classic) filter implementations that execute all operations (queries and insertions) in worst case constant time, and are space efficient, in the sense in which they require $(1+o_n(1))\cdot n\log(1/\eps) + O(n)$ bits~\cite{arbitman2010backyard, BerceaE20,BerceaE21, BenderFKKL22}.  The condition that  $\sum_{x\in\UU} p_xq_x \leq \eps/n$  can be seen as a generalization of the standard filter assumption that $\eps \geq n/u$.
	
	\subparagraph*{The Daisy Bloom filter.} For completeness, we also present our variant of the Weighted Bloom filter, which we call the \emph{Daisy Bloom filter}:\footnote{The daisy is one of our favorite flowers, especially when in full bloom, and is also a subsequence of ``\emph{d}yn\emph{a}m\emph{i}c \emph{s}trech\emph{y}'' which describes the key properties of our data structure. It is also the nickname of the Danish queen, whose residence is not far from the place where this work was conceived. Daisy Bloom filters are not related to any celebrities.}

	\begin{theorem} [Daisy Bloom filter - simplified] Given $0<\eps<1$, the Daisy Bloom filter has the following guarantees:
		\begin{itemize}
			\item it is a $(\calQ,\eps)$-filter with high probability over sets drawn from $\calP_n$, if $\sum_{x\in\UU} p_xq_x \leq \eps/n$,
			\item queries and insertions take at most $\ceil{\log_2(1/\eps)}$ time in the worst case,
			\item the space it requires is $\log(e)\cdot \LB(\calP_n,\calQ,\eps) + O(n)$ bits.
		\end{itemize}
	\end{theorem}
	
	In contrast to the weighted Bloom filters of Bruck et al.~\cite{BruckGJ06weighted}, the Daisy Bloom filter executes operations in time that is at most $\ceil{\log_2(1/\eps)}$ in the worst case (versus arbitrarily large) and achieves a false positive rate of at most $\eps$ with high probability over the input set (and not just on average).  We also depart in our analysis from their unconstrained optimization approach (to setting $k_x$ ) and instead use Bernstein's inequality to argue that, if the length of the array is set to  $\log(e)\cdot \LB(\calP_n,\calQ,\eps) + O(n)$ bits, then whp, at most half of the entries in the array will be set to $1$ (see Sec.~\ref{sec:ub} for more details).
	
	\subsection{Related Work}\label{sec:related}
	Filters have been studied extensively in the literature~\cite{bender2018bloom,BerceaE20,BerceaE21, carter1978exact, dietzfelbinger2008succinct, liu2020succinct, DBLP:conf/soda/PaghPR05, pagh2013approximate, porat2009optimal}, with Bloom filters perhaps the most widely employed variants in practice~\cite{DBLP:journals/comsur/LuoGMRL19}. 	Learning-based approaches to classic algorithm design have recently attracted a great deal of attention, see e.g.~\cite{ dong2020learning,galakatos2019fiting,hsu2019learning,kraska2018case,purohit2018improving}. For a comprehensive survey on learned data structures, we refer the reader to Ferragina and Vinciguerra~\cite{FerraginaV19}.
	
	\paragraph*{Weighted Bloom Filters} Given information about the probability of inserting and querying each element, Bruck, Gao and Jiang~\cite{BruckGJ06weighted} set out to find an optimal choice of the parameters $k_x$ that limit the false positive rate (in expectation over both the input and the query distribution).
	The approach is to solve an unconstrained optimization problem where the variables $k_x$ can be any real number.
	In a post-processing step each $k_x$ is rounded to the nearest non-negative integer.
	Unfortunately, this process does not lead to an optimal choice of parameters, and in fact, does not guarantee a non-trivial false positive rate.
	The issue is that the solution to the unconstrained problem may have many negative values of $k_x$, so even though the weighted sum $\sum_x p_x k_x$ is bounded, the post-processed sum $\sum_x p_x \max(k_x, 0)$ can be arbitrarily large.
	In particular, this is the case if at least one element is queried very rarely.
	This means that the weighted Bloom filter may consist only of 1s with high probability, resulting in a false positive probability of~1. 
	
	The above issue was noted by Wang, Ji, Dang, Zheng and Zhao~\cite{WangJDZZ15improved} who  attempt to correct the values for $k_x$, but their analysis still suffers from the same, more fundamental, problem: the existence of a very rare query element drives the false positive rate to 1.
	Wang et al.~\cite{WangJDZZ15improved} also show an information-theoretical ``approximate lower bound'' on the number of bits needed for a weighted Bloom filter with given distributions $\mathcal{P}$ and $\mathcal{Q}$.
	The sense in which the lower bound is approximate is not made precise, and the lower bound is certainly not tight (for example, it can be negative).

	\paragraph*{Partitioned Learned Bloom Filters} 
	There are several learned Bloom filter designs that assume that the filter has access to a learned model of the input set~\cite{kraska2018case, NEURIPS2018_0f49c89d,dai2020adaptive, vaidya2021partitioned}. The model is given a fixed input set $S$ and a representative sample of elements in $\UU\setminus S$ ( the query distribution is not specified).
	Given a query element $x$, the model returns a \emph{score} $s(x)\in [0,1]$, which can be intuitively thought of as the model's belief that $x\in S$.  Based on this score, Vaidya, Knorr, Mitzenmacher and Kraska~\cite{vaidya2021partitioned} choose a fixed number of $k$ thresholds,  partition the elements according to these thresholds, and build separate Bloom filters for each set of the partition. For fixed threshold values, they then formulate the optimization problem of setting the false positive rates $f_i$ such that the total space of the data structure is minimized and the overall false positive rate is at most a given $F$.
	
	As noted by Ferragina and Vinciguerra~\cite{FerraginaV19}, a significant drawback in these constructions is that the guarantees they provide depend significantly on the query set given as input to the machine learning component and in particular, the set being representative for the whole query distribution. We avoid this issue by making the dependencies on $q_x$ explicit and by bounding the average false positive probability even when just one element is queried. In addition, our data structure does not need to know the set $S$ in advance (and hence, training can be done just once, in a pre-processing phase), employs only one data structure, and our guarantees are robust to approximate values for $p_x$ and $q_x$.

	\subsection{Paper Organization}
	
	After some preliminaries, Sec.~\ref{sec:LB} shows our lower bound on the space usage. In Sec.~\ref{sec:opt-filter}, we discuss a space-efficient filter with constant time worst-case operations. Finally, Sec.~\ref{sec:ub} presents the analysis of the Daisy Bloom filter.

	\section{Preliminaries}

	For clarity, throughout the paper, we will distinguish between probabilities over the randomness of the input set, denoted by $\Prr{\calP_n}{\cdot}$, and probabilities over the internal randomness of the filter, denoted by $\Prr{A}{\cdot}$. 
	Joint probabilities are denoted by $\Prr{\calP_n,A}{\cdot}$. For the analysis, it will also make sense to partition the universe $\UU$ into the following $5$ parts:
	
	\begin{align*}
		\UU_0 &\triangleq \set {x\in\UU  \mid  q_x \leq \eps p_x} \; , \\
		\UU_1 &\triangleq \set {x\in\UU  \mid  q_x >  \eps p_x \text{ and } p_x > 1/n} \; , \\
		\UU_2 &\triangleq \set {x\in\UU  \mid   \eps p_x < q_x \leq \min\!\set{p_x,  \eps/n}} \; , \\ 
		\UU_3 &\triangleq \set {x\in\UU  \mid  q_x > p_x \text{ and }  \eps/n \ge p_x} \; , \\
		\UU_4 &\triangleq \set {x\in\UU  \mid  q_x >  \eps/n \text{ and }  \eps/n < p_x \le 1/n} \; .
	\end{align*}
	
	The high probability guarantees we obtain increase with $\LB(\calP_n,\calQ,  \eps)$. Therefore,
	such bounds are meaningful for distributions in which the optimal size $\LB(\calP_n,\calQ,  \eps)$  of a filter is not too small.
	Similarly, we can assume that the size of the universe is polynomial in $n$, and so $\log(1/ \eps) = O(\log n)$ in the standard case in which $ \eps>n/\size{\UU}$.
	Therefore, while in general $\LB(\calP_n,\calQ,  \eps)$ can be much smaller than $n\log_2(1/ \eps)$, we do require some mild dependency on $n$ for the high probability bounds to be meaningful. 	Finally, we recall the following classic result in data compression:


	\begin{theorem}[Kraft's inequality~\cite{thomas2006elements}]\label{thm:Kraft}
		For any instantaneous code (prefix code) over an alphabet of size $D$, the codeword lengths $\ell_1,\ell_2,\ldots, \ell_m$ must satisfy the inequality
		\begin{center}
			$ \sum_i D^{-\ell_i} \leq 1\;.$
		\end{center}
		Conversely, given a set of codeword lengths that satisfy this inequality,
		there exists an instantaneous code with these word lengths.
	\end{theorem}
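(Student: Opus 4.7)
The plan is to prove both directions of Kraft's inequality by modelling codewords as nodes in the infinite complete $D$-ary tree: identify the code alphabet with the edge labels out of every node, so that each finite string corresponds uniquely to a node whose depth equals the string's length. Under this identification, a code is instantaneous (prefix-free) if and only if no chosen node is an ancestor of another, equivalently, the subtrees rooted at distinct codewords are pairwise vertex-disjoint. This tree picture reduces both halves of the statement to a simple counting argument on leaves at a common depth.

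For the necessity direction, first suppose the code is finite and let $\ell_{\max} = \max_i \ell_i$. A codeword at depth $\ell_i$ has exactly $D^{\ell_{\max}-\ell_i}$ descendants at depth $\ell_{\max}$, and by the prefix-free property these descendant sets are pairwise disjoint as $i$ varies, so their sizes sum to at most the total number $D^{\ell_{\max}}$ of depth-$\ell_{\max}$ nodes. Dividing through by $D^{\ell_{\max}}$ yields Kraft's inequality. For countably infinite codes I would apply the finite bound to every finite prefix of the sequence of codewords and take the supremum; monotone convergence of the partial sums then delivers the full inequality.

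For the sufficiency direction, sort the prescribed lengths so that $\ell_1 \le \ell_2 \le \cdots \le \ell_m$ and build the code greedily. At stage $i$, having already placed codewords of lengths $\ell_1,\ldots,\ell_{i-1}$, the forbidden nodes at depth $\ell_i$ (those that would violate prefix-freeness) are exactly the descendants of previously placed codewords, of which there are $\sum_{j<i} D^{\ell_i - \ell_j}$. From Kraft's inequality I would derive $\sum_{j<i} D^{-\ell_j} \le 1 - D^{-\ell_i}$ for every $i \le m$, hence $\sum_{j<i} D^{\ell_i - \ell_j} \le D^{\ell_i}-1 < D^{\ell_i}$, so at least one node at depth $\ell_i$ remains available to serve as the $i$th codeword. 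The main subtlety I anticipate is the boundary case in which Kraft's inequality is tight: one must verify that the strict inequality $\sum_{j<i} D^{\ell_i - \ell_j} < D^{\ell_i}$ survives at every stage, which is precisely what the sorting assumption secures by always reserving the contribution $D^{-\ell_i}$ on the right-hand side of the estimate above.
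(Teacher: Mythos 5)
Your proof is correct, and it is essentially the standard argument: the paper itself does not prove Kraft's inequality but cites it from the textbook reference~\cite{thomas2006elements}, whose proof is the same $D$-ary-tree counting you give (disjoint descendant sets at depth $\ell_{\max}$ for necessity, greedy allocation of subtrees in nondecreasing order of length for sufficiency). Your handling of the tight case via sorting, reserving the $D^{-\ell_i}$ term so that a free node at depth $\ell_i$ always remains, is exactly the right observation, so there is nothing to fix.
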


	\section{The Lower Bound}\label{sec:LB}
	The goal of this section is to prove the lower bound from Thm.~\ref{thm:lbinf}.
	As discussed, we prove a slightly stronger statement where we allow our algorithm to not produce a $(\calQ,\eps)$-filter for some  input sets as long as the probability of sampling them is low. Formally, we show that:

	\begin{theorem}\label{thm:lb}
		Let $\calT \subseteq \mathbb{P}(\UU)$ be given such that $\Prr{\calP_n}{S \not\in \calT} \le \frac{1}{\log u}$.
		If $A$ is an algorithm such that for all $S \in \calT$, $A(S)$ is a $(\calQ, \eps)$-filter for $S$.
		Then the expected size of $A(S)$ must satisfy
		$$\expectation{\calP_n, A}{\size{A(S)}} \geq \LB(\calP_n, \calQ, \eps) - 1 - 6n \; ,$$
		where $S$ is sampled with respect to $\calP_n$.
	\end{theorem}
	
	\begin{proof}
		Each instance $\calI$ of the data structure corresponds to a subset $\UU_{\calI} \subset \UU$ on which the data structure answers \YES.
		We denote the number of bits needed by such an instance by $\size{\calI}$.
		For any set $S \in \calT$, we have that $\calI = A(S)$ satisfies that $S \subseteq \UU_{\calI}$ and
		\[
		\expectation{A}{\sum_{x \in \UU_{\calI} \setminus S} q_x} \le \eps \; .
		\]
		The goal is to prove that
		\[
		\expectation{\calP_n, A}{\size{A(S)}} \geq n \cdot \parentheses{\sum_{x\in\UU_2} p_x\log\parentheses{\frac{1}{\eps} \cdot \frac{q_x}{p_x}} + \sum_{x\in\UU_3} p_x\log\frac{1}{\eps} + \sum_{x\in\UU_4} p_x\log \frac{1}{n p_x} } - 1 - 6n \; .
		\]

		We will lower bound $\expectation{\calP_n, A}{\size{A(S)}}$ by using it to encode an ordered sequence of $n$ elements drawn according to $\calP_{n}$.
		Specifically, for any ordered sequence of $n$ elements $\hat{S} \in \UU^n$, we let $S \subseteq U$ be the set of distinct elements and let $\calI = A(S)$ as above.
		We first note that to encode $\hat{S} \sim \calP_n$, in expectation, we need at least the entropy number of bits, i.e.,
		\begin{align}\label{eq:entropy}
			n \sum_{x \in \UU} p_x \log \frac{1}{p_x} \; .		
		\end{align}
		
		Now our encoding using $\calI$ will depend on whether $S \in \calT$ or not.
		First, we will use 1 bit to describe whether $S \in \calT$ or not.
		For $(x_i)_{i \in [n]} \in \hat{S}$, we will denote $b_i$ to be the number bits to encode $x_i$.
		If $S \not\in \calT$ then for all $i \in [n]$ we encode $x_i$ using $b_i = \ceil{\log(1/p_{x_i})}$ bits.
		If $S \in \calT$ then for all $i \in [n]$ we encode $x_i$ depending on which subset if $\UU$ it belongs to:
		\begin{enumerate}
			\item If $x_i \in \UU_0 \cup \UU_1$, we encode $x_i$ using $b_i = \ceil{\log(4/p_{x_i})}$ bits.
			\item If $x_i \in \UU_2$, we encode $x_i$ using $b_i = \ceil{\log\left(4 \tfrac{\sum_{y \in \UU_{\calI} \cap \UU_2} q_y}{q_{x_i}} \right)}$ bits.
			\item If $x_i \in \UU_3$, we encode $x_i$ using $b_i = \ceil{\log\left(4 \tfrac{\sum_{y \in \UU_{\calI} \cap \UU_3} p_y}{p_{x_i}} \right)}$ bits.
			\item If $x_i \in \UU_4$, we encode $x_i$ using $b_i = \ceil{\log\left(4 \size{\UU_{\calI} \cap \UU_4} \right)}$ bits.
		\end{enumerate}
		It is clear from the construction that we satisfy the requirement for Thm.\ref{thm:Kraft} thus there exists such an encoding.
		Now we will bound the expectation of the size of this encoding:
		\begin{align*}
			\expectation{\calP_n, A}{\size{A(S)} + 1 + \sum_{i \in [n]} b_i}
			&= \expectation{\calP_n, A}{\size{A(S)}} + 1 + \expectation{\calP_n, A}{\sum_{i \in [n]} b_i} \; .
		\end{align*}
		We will write $\expectation{\calP_n, A}{\sum_{i \in [n]} b_i} = \expectation{\calP_n, A}{\indicator{S \in \calT}\sum_{i \in [n]} b_i} + \expectation{\calP_n, A}{\indicator{S \not\in \calT}\sum_{i \in [n]} b_i}$, and bound each term separately.
		
		We start by bounding $\expectation{\calP_n, A}{\indicator{S \not\in \calT}\sum_{i \in [n]} b_i}$.
		\begin{align*}
			\expectation{\calP_n, A}{\indicator{S \not\in \calT}\sum_{i \in [n]} b_i}
			&= \expectation{\calP_n}{\indicator{S \not\in \calT} \sum_{i \in [n]} \ceil{\log(1/p_{x_i})}}
			\\&\le \Prr{\calP_n}{S \not\in \calT}n + \expectation{\calP_n}{\indicator{S \not\in \calT} \log\left(\prod_{i \in [n]} 1/p_{x_i}\right)}
			\\&= \Prr{\calP_n}{S \not\in \calT}n + \sum_{\hat{s} \in \UU^n}\indicator{\hat{s} \in \calT} \Prr{\calP_n}{\hat{S} = \hat{s}}\log\frac{1}{\Prr{\calP_n}{\hat{S} = \hat{s}} } 
		\end{align*}
		Now using Jensen's inequality we get that
		\[
		\sum_{\hat{s} \in \UU^n}\indicator{\hat{s} \in \calT} \Prr{\calP_n}{\hat{S} = \hat{s}} \log\frac{1}{\Prr{\calP_n}{\hat{S} = \hat{s}} } \le \Prr{\calP_n}{S \not\in T} \log\left( \tfrac{1}{\Prr{\calP_n}{S \not\in T} u^n} \right) \; .
		\]
		Putting this together with the fact that $\Prr{\calP_n}{S \not\in T} \le \tfrac{1}{\log u}$, we get that,
		\begin{align*}
			\expectation{\calP_n, A}{\indicator{S \not\in \calT}\sum_{i \in [n]} b_i}
			\le 2n
			\; .
		\end{align*}
		
		Now we bound $\expectation{\calP_n, A}{\indicator{S \in \calT}\sum_{i \in [n]} b_i} = \sum_{i \in [n]} \expectation{\calP_n, A}{\indicator{S \in \calT}b_i}$.
		We will bound $\expectation{\calP_n, A}{\indicator{S \in \calT}b_i}$ depending on which subset of $\UU$ that $x_i$ belongs to.
		
		If $x_i \in \UU_0 \cup \UU_1$, then we have that $\expectation{\calP_n, A}{\indicator{S \in \calT}b_i} \le \ceil{\log(4/p_{x_i})} \le 3 + \log(1/p_{x_i})$.

		
		If $x_i \in \UU_2$, define $Z_2 = \UU_{\calI} \cap \UU_2$. Then 	$$\expectation{\calP_n, A}{\indicator{S \in \calT}b_i}  \le 3 + \expectation{\calP_n, A}{\indicator{S \in \calT}\log\left(\tfrac{\sum_{y \in Z_2} q_y}{q_{x_i}} \right)} \;.$$ We know that $\sum_{y \in S \cap \UU_2} q_y \le \eps$ since $q_y \le\eps/n$ for all $y \in \UU_2$ and $\size{S} \le n$. We also know that $\expectation{A}{\sum_{x \in Z_2 \setminus S} q_x} \le \eps$ for $S \in \calT$.
		Now using Jensen's inequality we get that 
		\begin{align*}
			\expectation{\calP_n, A}{\indicator{S \in \calT}\log\left(\tfrac{\sum_{y \in Z_2} q_y}{q_{x_i}} \right)}
			&= \expectation{\calP_n}{\indicator{S \in \calT}\expectation{A}{\log\left(\tfrac{\sum_{y \in Z_2} q_y}{q_{x_i}} \right)}}
			\\&\le \expectation{\calP_n}{\indicator{S \in \calT} \log\left(\tfrac{\expectation{A}{\sum_{y \in Z_2} q_y}}{q_{x_i}} \right) }
			\\&\le \expectation{\calP_n}{\indicator{S \in \calT} \log\left(\tfrac{2\eps}{q_{x_i}} \right) } \le 1 + \expectation{\calP_n}{\log\left(\tfrac{\eps}{q_{x_i}} \right) } \; .
		\end{align*}


		If $x_i \in \UU_3$, define $Z_3=\UU_{\calI} \cap \UU_3$. Then $$\expectation{\calP_n, A}{\indicator{S \in \calT}b_i}  \le 3 + \expectation{\calP_n, A}{\indicator{S \in \calT}\log\left(\tfrac{\sum_{y \in Z_3} p_y}{p_{x_i}} \right)} \;.$$ We know that $\sum_{y \in S \cap \UU_3} p_y \le \eps$ since $p_y \le \eps/n$ for all $y \in \UU_3$ and $\size{S} \le n$.
		We also know that $\expectation{A}{\sum_{x \in Z_3 \setminus S} p_x} \le \expectation{A}{\sum_{x \in Z_3\setminus S} q_x} \le \eps$ for $S \in \calT$.
		Using Jensen's inequality we get that,
		\begin{align*}
			\expectation{\calP_n, A}{\indicator{S \in \calT}\log\left(\tfrac{\sum_{y \in Z_3} p_y}{p_{x_i}} \right)}
			&= \expectation{\calP_n}{\indicator{S \in \calT}\expectation{A}{\log\left(\tfrac{\sum_{y \in Z_3} p_y}{p_{x_i}} \right)}}
			\\&\le \expectation{\calP_n}{\indicator{S \in \calT} \log\left(\tfrac{\expectation{A}{\sum_{y \in Z_3} p_y}}{p_{x_i}} \right) }
			\\&\le \expectation{\calP_n}{\indicator{S \in \calT} \log\left(\tfrac{2\eps}{p_{x_i}} \right) } \le 1 +  \expectation{\calP_n}{\log\left(\tfrac{\eps}{p_{x_i}} \right) }\; .
		\end{align*}


		If $x_i \in \UU_4$, define $Z_4 = \UU_{\calI} \cap \UU_4$. Then $\expectation{\calP_n, A}{\indicator{S \in \calT}b_i}  \le 3 + \expectation{\calP_n, A}{\indicator{S \in \calT}\log\left(\size{Z_4} \right))} $. We know that $\size{Z_4} = \size{Z_4\cap S} + \size{Z_4 \setminus S} \le n + \tfrac{n}{\eps} \sum_{x \in Z_4 \setminus S} q_x$ since $q_y > \eps/n$ for all $y \in \UU_4$ and $\size{S} \le n$.
		Using Jensen's inequality we get that, for $Z'_4 = Z_4 \setminus S$:
		
		\begin{align*}
			\expectation{\calP_n, A}{\indicator{S \in \calT}\log\left(n + \frac{n}{\eps} \sum_{x \in Z'_4} q_x \right)}
			&= \expectation{\calP_n}{\indicator{S \in \calT}\expectation{A}{\log\left(n + \frac{n}{\eps}  \sum_{x \in Z'_4} q_x \right)}}
			\\&\le \expectation{\calP_n}{\indicator{S \in \calT} \log\left(\expectation{A}{n + \frac{n}{\eps} \sum_{x \in Z'_4} q_x} \right) }
			\\&\le \expectation{\calP_n}{\indicator{S \in \calT} \log\left( 2n \right) } \le 1 + \log(n) \; .
		\end{align*}

		Combining it all we get an encoding that in expectation uses at most 
		\begin{align*}
			&\expectation{\calP_n, A}{\size{A(S)}} + 1 + 6n + 
			\\&\qquad\qquad\sum_{x\in (\UU_0\cup \UU_1)} p_x \log(1/p_x) + \sum_{x\in \UU_2} p_x \log(\eps/q_x) + \sum_{x\in \UU_3} p_x \log(\eps/p_x) + \sum_{x\in \UU_4} p_x \log n
			\; .
		\end{align*}
		bits to encode $\hat{S}$.
		Comparing this with Eq.\ref{eq:entropy} we get that,
		\begin{align*}
			\expectation{\calP_n, A}{\size{A(S)}}
			\ge n \cdot \parentheses{\sum_{x\in\UU_2} p_x\log\parentheses{\frac{1}{\eps} \cdot \frac{q_x}{p_x}} + \sum_{x\in\UU_3} p_x\log\frac{1}{\eps} + \sum_{x\in\UU_4} p_x\frac{1}{n p_x} } - 1 - 6n
			\; .
		\end{align*}
		This proves the claim. 
	\end{proof}

	\section{Space-Efficient  Filter}\label{sec:opt-filter}
	
	In this section, we show how one can use the $k_x$ values proposed to design a space-efficient $\calQ$-filter with worst-case constant time operations. Formally, we show that:
	
	\begin{theorem}\label{thm:spaceefficient} Assume that $\calP_n$ and $\calQ$ satisfy $n\sum_{x\in\UU}p_xq_x \leq \eps$. Then there exists a filter with the following guarantees:
		\begin{itemize}
			\item there exists  $\calT \subseteq \mathbb{P}(U)$ where a set $S \in \calT$ with high probability over the randomness of $\calP_n$, such that the filter is a $(\calQ,\eps)$-filter for any $S \in \calT$,
			\item the filter uses
			$(1+o_n(1)) \cdot LB(\calP_n,\calQ,\eps) + O(n) $ bits,
			\item the filter executes queries and insertions in worst case constant time and,
			\item the filter does not fail with high probability over its internal randomness.
		\end{itemize}	
	\end{theorem}

	\subsection{Construction}
	For $j\in \set{1,\ldots,\ceil{\log(1/\eps)}}$, we let $\UU^{(j)} \triangleq \set{x\in \UU \mid \ceil{k_x} = j }$ denote the set of elements that hash to $j$ locations in the Daisy Bloom filter and $P_j \triangleq \sum_{x\in\UU^{(j)} } p_x$ denote the probability that we select an element from $\UU^{(j)}$ in one sample from $\calP$. Then $n_j \triangleq n\cdot P_j$ denotes the average number of elements from  $\UU^{(j)}$ that we expect to see in the input set. 
	We distinguish between the sets $\set{\UU^{(j)}}$ depending on their corresponding $n_j$.
	Specifically, we say that  $\UU^{(j)}$ is a \emph{rare class} if $n_j < n/\log^c n$, for some constant $c>2$, and otherwise we say that $\UU^{(j)}$ is a \emph{frequent class}. We further define  $\UU_r\subseteq \UU$ to be the set of all elements that are in a rare class, i.e.,  $\UU^{(j)} \subseteq \UU$ if and only if  $\UU^{(j)}$  is a rare class.

	Now let $\calF(\eps,n)$ be a (standard) filter implementation for at most $n$ elements with false positive probability at most $\eps$. We focus on implementations that execute all operations (queries and insertions) in worst-case constant time, and are space efficient: they require $(1+o_n(1))\cdot n\log(1/\eps) + O(n)$ bits~\cite{arbitman2010backyard, BerceaE20,BerceaE21, BenderFKKL22}. 
	We employ $\ceil{\log(1/\eps)}+1$ instantiations of $\calF$, which we denote by  $\calF_1,\ldots,\calF_{\ceil{\log(1/\eps)}}$ and $\calF_r$. They are parametrized as follows: for $j\in \set{1,\ldots,\ceil{\log(1/\eps)}}$, we further define $N_ j \triangleq (1+1/\log n)\cdot n_j$ and  instantiate $\calF_j = \calF(2^{-j}, N_j)$. 
	We instantiate $\calF_r$ as $\calF_r = \calF(F, N_r)$, where $N_r \triangleq \Theta(n/\log^{c-1} n)$.
	
	\subparagraph*{Operations.} We distinguish between elements that are in a frequent class and elements that are in a rare class. If an element is in a frequent class $\UU^{(j)}$, then operations are forwarded to the corresponding filter $\calF_j$. Otherwise, the operation is forwarded to $\calF_r$. Since all the filters we employ perform operations in constant time in the worst case, the same holds for our construction\footnote{We note here that it is possible to combine the filters $\calF_1,\ldots,\calF_{\ceil{\log(1/\eps)}}$ into one single data structure. One could use, for example, the balls-into-bins implementation in~\cite{BerceaE20}, where elements are randomly assigned to one of $n/\Theta(\log n/ (\log(1/\eps)))$ buckets and the buckets explicitly store random strings of length $\log(1/\eps)$ associated with the elements that hash into them. Combining $\calF_1,\ldots,\calF_{\ceil{\log(1/\eps)}}$ would then entail ``superimposing'' their buckets.}.
	
	\subparagraph*{Space.} We now bound the total number of bits that $\calF_1,\ldots,\calF_{\ceil{\log(1/\eps)}}$ and $\calF_r$ require:
	
	\begin{lemma}
		The above filter requires $(1+o_n(1))\cdot 	\LB(\calP_n,\calQ, \eps)  + O(n)$ bits.
	\end{lemma}
	\begin{proof} Recall that $\calF_1,\ldots,\calF_{\ceil{\log(1/\eps)}}$ and $\calF_r$ are instantiations of a filter $\calF$ which requires $(1+f(n))\cdot n\log(1/\eps) + O(n)$ bits for a set of $n$ elements and false positive probability $\eps$, where the function $f(n)$ satisfies $f(n)=o_n(1)$~\cite{arbitman2010backyard,BerceaE20,BerceaE21,BenderFKKL22}. For simplicity, we choose the implementation in~\cite{BerceaE20}, where $f(n) = \Theta(\log\log n / \sqrt{\log n})$. Consequently, for $j\in \set{1,\ldots,\ceil{\log(1/\eps)}}$, the space of $\calF_j$ is:
		$$ (1+ f(N_j)) \cdot N_j\log(1/2^{-j}) + O(N_j)  =  (1+ f(N_j)) \cdot N_j\cdot j + O(N_j)$$
		bits. Since $\calF_j$ is instantiated only for frequent classes, it follows that $n\geq N_j \geq  n/\log^c n$ and hence, $f(N_j) = \Theta(f(n))$ for all $j$ with $\UU^{(j)}$ a frequent class. Furthermore, by definition, we have that 
		$j \leq k_x + 1$ for all $x\in\UU^{(j)}$ and $N_j = (1+1/\log n) \cdot n_j = (1+1/\log n) \cdot n \sum_{x\in \UU^{(j)}} p_x $. Therefore,
		the space that $\calF_j$ requires can be upper bounded by
		$$ (1+ \Theta(f(n))) \cdot n \sum_{x\in \UU^{(j)}} p_x k_x + O(N_j) \;.$$
		Since  $ \sum_j N_j = (1+1/\log n) \cdot n$ 
		we get that, in the worst case in which all the classes are frequent, the filters $\calF_1,\ldots,\calF_{\ceil{\log(1/\eps)}}$  require
		$$ (1+ o_n(1)) \cdot n  \sum_{x\in \UU} p_x k_x + O(n) = (1+o_n(1))\cdot 	\LB(\calP_n,\calQ, \eps)  + O(n)$$
		bits. The space of the final filter $\calF_r$  is upper bounded by 
		$\Theta(n/\log^{c-1}n) \cdot \log(1/\eps) = \Theta(n/\log^{c-2}n) = o(n)$ bits for any constant $c>2$. The claim follows.
	\end{proof}
	
	\subsection{Analysis}
	
	In this section, we show that the filter described above does not fail whp and that it achieves a false positive probability of at most $3\eps$ with respect to $\calQ$.  In our construction, there are two sources of failure: when the number of elements which we insert into each filter exceeds the maximum capacity of the filter, and when the filters themselves fail as a consequence of their internal randomness. 
	In the latter case, we note that the failure probability of $\calF(\eps,n)$ is guaranteed to be at most $1/\poly(n)$, where the degree of the polynomial is a constant of our choosing~\cite{arbitman2010backyard, BerceaE20,BerceaE21, BenderFKKL22}.
	Since all of the instantiations we employ have maximum capacities which are $\Omega(n/\polylog(n))$, we conclude that each of these separate instantiations also fails with probability at most $1/\poly(n)$, and therefore, by a union bound over the $\ceil{\log(1/\eps)}+1 = O(\log n)$ instantiations, we get that some filter fails with probability at most $1/\poly(n)$. We now show that the maximum capacities we set for each filter suffice.
	
	\begin{lemma}
		Whp, at most $N_r = \Theta(n/\log^{c-1} n)$ elements from $\UU_r$  are sampled in the input set.
	\end{lemma}
	\begin{proof}
		Let $\UU^{(j)}$ be a rare class and let $X_j$ denote the number of elements from $\UU^{(j)}$ that we sample in the input set.
		By definition, the  expected number of elements we see from  $\UU^{(j)}$ satisfies $\expectation{\calP}{X_j} = n_j < n/\log^c n$, for some constant $c>2$. By Chernoff bound, we therefore get that:
		\begin{align*}
			\Pr{X_j >  6n/\log^{c} n } &\leq 2^{-6n/\log^{c} n } \;.
		\end{align*}
		There are at most $\ceil{\log(1/\eps)} = O(\log n)$ possible rare classes, and so, by the union bound, the number of elements from $\UU_r$ that we sample in the input set is at most   $N_r= \Theta(n/\log^{c-1} n  )$ whp.
	\end{proof}

	We now focus on sampling elements from a frequent class:
	\begin{lemma} Let $\UU^{(j)}$ be a frequent class. Then,  whp, at most $N_j$ elements from $\UU^{(j)}$ are sampled in the input set.
	\end{lemma}
	\begin{proof}
		Let $X_j$ denote the number of elements from $\UU^{(j)}$ that we sample in the input set and note that $\expectation{\calP}{X_j} = n_j \geq n/\log^c n$. By Chernoff:
		\begin{align*}
			\Pr{X_j >  N_j} = \Pr{X_j >  (1+1/\log n) \cdot n_j} &\leq \exp(-\Theta(n/\log^{c-2} n)) \leq 1/\poly n \;.
		\end{align*}
		This concludes our proof.
	\end{proof}

	Finally, we bound the false positive rate of the  filter:
	
	\begin{lemma}\label{lem:fpr}  Assume that $\calP_n$ and $\calQ$ satisfy $n\sum_{x\in\UU}p_xq_x \leq \eps$ and that the input set $S$ does not make $\calF_1,\ldots, \calF_{\ceil{\log(1/\eps)}}$ and $\calF_r$ fail. Then the filter described is a $(\calQ,3\eps)$-filter on $S$.
	\end{lemma}
	\begin{proof}
		Fix an input set $S$ and denote by $A'(S,x)$ the output of the filter when queried for an element $x$. We are interested in bounding
		$\Pr{A'(S,x) = \textsf{YES}}$ for an element $x\notin S$. 
		If $x\in \UU_r$, then we forward the query operation to $\calF_r$, which guarantees that $\Pr{A'(S,x) = \textsf{YES}} \leq \eps$. Therefore:
		$$\sum_{x\in\UU_r } q_x \cdot  \Pr{A'(S,x) = \textsf{YES}}  \leq \sum_{x\in\UU_r } q_x \cdot \eps \;,$$

		Otherwise, if $x\notin\UU_r$, the query is forwarded to $\calF_j$, where $j = \ceil{k_x}$. In this case,
		$\Pr{A'(S,x) = \textsf{YES}} \leq 2^{-j} \leq 2^{-k_x}$ and we get that 
		$$\sum_{x\in\UU_0 \cup \UU_2 \setminus \UU_r} q_x \cdot  \Pr{A'(S,x) = \textsf{YES}}  \leq  \sum_{x\in\UU_0  \setminus \UU_r} q_x  + \sum_{x\in\UU_2 \setminus \UU_r} p_x \eps \le  \sum_{x\in\UU_0 \cup \UU_2 \setminus \UU_r} p_x \cdot \eps\;,$$
		and similarly, 
		$$\sum_{x\in\UU_1 \cup \UU_4 \setminus \UU_r} q_x \cdot  \Pr{A'(S,x)  = \textsf{YES}} \le \sum_{x\in\UU_1  \setminus \UU_r} q_x + \sum_{x\in\UU_4  \setminus \UU_r} n p_xq_x   \leq \sum_{x\in\UU_1 \cup \UU_4 \setminus \UU_r} n p_x q_x \;.$$
		Finally, we have that
		$$\sum_{x\in\UU_3 \setminus \UU_r} q_x \cdot  \Pr{A'(S,x) = \textsf{YES}}  \leq \sum_{x\in\UU_3 \setminus \UU_r} q_x \cdot \eps \;.$$
		Adding all of these quantities, we obtain the claim.
	\end{proof}
	
	\subsection{Remarks}\label{sec:estimate}
	The filter construction assumes that we know, in advance, whether a class $\UU^{(j)}$ is frequent and, if so, what is the value of its corresponding $P_j = \sum_{x\in \UU^{(j)}} p_x$. This is because we employ fixed capacity filters  which require us to provide an upper bound on the cardinality of the input set $S \cap \UU^{(j)}$ at all points in time. 
	We note that this assumption can be alleviated in two ways: on one hand, one can employ filters that do not require us to know the size of the input set in advance~\cite{pagh2013approximate, BerceaE22}. This would incur an additional $\Theta(n\log\log n)$ bits in the space consumption of our filter (operations would remain constant time worst case).
	
	On the other hand, one can estimate $P_j$ for all frequent classes $\UU^{(j)}$ if we are allowed to take  $\polylog(n)$ samples from $\calP$ before constructing the filter. Specifically, fix $j \in \set{1,\ldots,\ceil{\log(1/\eps)}}$ and take $\ell = O(\log^{2c} n)$ samples from $\calP$. Define $Z_j$ to be the number of sampled elements that are in $\UU^{(j)}$. Then, if $\UU^{(j)}$ is indeed frequent, by the standard Chernoff bound we get that, whp, $Z_j > \log^c n $ elements and $Z_j/\ell$ is an unbiased estimator for $P_j$ with the guarantee that
	$P_j = (1\pm O(1/\log^{(c-1)/2} n))\cdot Z_j/\ell $ whp. Note that we can tolerate such an estimate since we set the maximum capacity of each filter to be $N_j = (1+1/\log n)\cdot nP_j$. A similar argument can be used for estimating the lower bound $\LB(\calP_n,\calQ, \eps)  = n\cdot \sum_{x\in\UU} p_x \cdot k_x$ whp. Specifically, by the definition of $P_j$, we have that
	$$ \sum_{x\in\UU} p_x \cdot \ceil{k_x} = \sum_{j=1}^{\ceil{\log(1/\eps)}} j\cdot P_j \;.$$
	If $\UU^{(j)}$ is a frequent class, then by the above argument we have an estimate of its $P_j$. Otherwise, we know that $P_j < 1/\log^c n$ and, since $j\leq \ceil{\log(1/\eps)} =O(\log n)$, get that
	$$\sum_{j \text{ s.t. } \UU^{(j)} \in \UU_r}j\cdot P_j \leq \sum_{j \text{ s.t. } \UU^{(j)} \in \UU_r}\ceil{\log(1/\eps)} \cdot 1/\log^c n\leq 1/\log^{c-2} n \;,$$
	which contributes a $o(n)$ term to the lower bound.

	\section{The Daisy Bloom Filter Analysis}\label{sec:ub}
	In this section, we analyse the behaviour of the Daisy Bloom filter with the values $k_x$ denoting the number of hash functions that we use to hash $x$ into the array.
	Let $X_i$ denote the number of hash functions that are employed when we sample in the $i^{th}$ round, i.e., $X_ i = k_x$ with probability $p_x$. 
	Then $X = \sum_i X_i$ denotes the number of locations that are set in the Bloom filter (where the same location might be set multiple times).
	Moreover, $\expectation{\calP_n}{X} = n \cdot  \sum_{x\in \UU} p_x k_x = \LB(\calP_n,\calQ,\eps)$, since the $\set{X_i}_i$ are identically distributed. 
	We then set the length $m$  of the Daisy Bloom filter array  to $$m\triangleq  \expectation{\calP_n}{X}/ \ln2 \;.$$

	The remainder of this section is dedicated to proving the following statement:
	
	\begin{theorem}\label{ub}
		Assume that $\calP_n$ and $\calQ$ satisfy $n\sum_{x\in\UU}p_xq_x \leq \eps$. Then there exists $\calT \subseteq \mathbb{P}(U)$ such that $S \in \calT$ with high probability over the randomness of $\calP_n$, and for all $S \in \calT$ the Daisy Bloom Filter is a $(\calQ,\eps)$-filter for $S$.
		The Daisy Bloom filter uses $ \log(e)\cdot \LB(\calP_n,\calQ,\eps) + O(n)$ bits and executes all operations in at most $\ceil{\log(1/\eps)}$ time in the worst case.
	\end{theorem}

	The sets in $\calT$ are the sets for which $X \approx \expectation{\calP_n}{X} = \LB(\calP_n, \calQ, \eps)$.
	The reason we constrain ourselves to these sets, is that if $X \gg \LB(\calP_n, \calQ,\eps)$ then most bits will be set to 1 which will make the false positive rate large.
	We will bound the probability that $X \gg \LB(\calP_n, \calQ,\eps)$ by using Bernstein's inequality and here, the following observation becomes crucial:
	
	\begin{observation}\label{obs:upperbound} For every $x\in\UU$, $k_x \leq \log(1/\eps)$.
	\end{observation}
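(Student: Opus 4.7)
The plan is to simply verify the bound by a case analysis through the four regimes in the definition of $k_x$. Since $k_x$ is defined piecewise, checking $k_x \le \log(1/F)$ reduces to checking the inequality under the side conditions of each piece, and each check is a short algebraic manipulation.

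Concretely, I would proceed as follows. In the first case ($q_x \le Fp_x$ or $p_x > 1/n$), we have $k_x = 0$, and since $F < 1$ we trivially have $0 \le \log(1/F)$. In the second case ($Fp_x < q_x \le \min\{p_x, F/n\}$), the condition $q_x \le p_x$ gives $q_x/p_x \le 1$, so $k_x = \log\!\bigl(\tfrac{1}{F}\cdot\tfrac{q_x}{p_x}\bigr) \le \log(1/F)$. In the third case ($q_x > p_x$ and $F/n \ge p_x$), $k_x = \log(1/F)$ by definition, so equality holds. In the fourth case ($q_x > F/n$ and $F/n < p_x \le 1/n$), the condition $p_x > F/n$ gives $np_x > F$, hence $1/(np_x) < 1/F$, so $k_x = \log(1/(np_x)) < \log(1/F)$.

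There is no real obstacle here; the observation is essentially a sanity check that the piecewise definition of $k_x$ is capped at the standard Bloom filter parameter $\log(1/F)$. The only small point to keep track of is that the cases in regime 2 and regime 4 each use exactly one of the two inequalities imposed by the side condition of that regime ($q_x \le p_x$ in case 2, $p_x > F/n$ in case 4), so presenting the bound as four one-line checks is the cleanest exposition. The resulting uniform upper bound $k_x \le \log(1/F)$ is precisely what is needed later to apply Bernstein's inequality for concentration of $X = \sum_i X_i$.
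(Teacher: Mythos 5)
Your proof is correct and follows the same case analysis as the paper: $k_x = 0$ in the first regime, equality in the third, the inequality $q_x \le p_x$ in the second, and $p_x > F/n$ in the fourth. Nothing is missing.
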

	\begin{proof}
		For $x \in \UU_0 \cup \UU_1$, we have that $k_x = 0$ which is clearly less than $\log(1/\eps)$.
		For $x \in \UU_3$ we have that $k_x = \log(1/\eps)$ and again the statement holds trivially.
		For $x\in \UU_2$, we have that $q_x\leq p_x$ and so $k_x = \log(1/\eps \cdot q_x/p_x) \leq \log(1/\eps)$.
		For $x\in\UU_4$, we have that $p_x > F/n$ and so $k_x = \log(1/(np_x))<\log(1/\eps)$.
	\end{proof}

	We are now ready to prove that the random variable $X$ is concentrated around its expectation.
	\begin{lemma}\label{claim:Xconcentrated} For any $\delta>0$,
		\begin{align*}
			\Prr{\calP_n}{X   > (1+\tau)\cdot \expectation{\calP_n}{X}} & \leq   \exp\parentheses{-\frac{\tau^2\ln 2}{2(1+\tau/3)}\cdot \frac{m}{\log(1/\eps)} } 
		\end{align*}
	\end{lemma}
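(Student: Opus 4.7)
The plan is to apply Bernstein's inequality to the sum $X = \sum_{i=1}^n X_i$, since by construction the $X_i$ are i.i.d. (each equals $k_x$ with probability $p_x$), and Observation~\ref{obs:upperbound} gives the essential almost-sure bound $0 \leq X_i \leq \log(1/F)$. The shape of the desired bound --- namely the $\frac{\tau^2}{2(1+\tau/3)}$ factor and the appearance of $\log(1/F)$ in the denominator --- is the fingerprint of Bernstein, so this is the natural route.

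First I would recall Bernstein's inequality in the form: for independent variables $Y_i$ with $|Y_i - \expectation{}{Y_i}| \le M$ almost surely and $\sigma^2 = \sum_i \variance{}{Y_i}$, one has
\begin{equation*}
\Pr{\sum_i Y_i - \expectation{}{\sum_i Y_i} > t} \;\le\; \exp\!\parentheses{-\frac{t^2}{2(\sigma^2 + Mt/3)}}.
\end{equation*}
I would then instantiate with $Y_i = X_i$, $M = \log(1/F)$ (from Observation~\ref{obs:upperbound}), and $t = \tau \expectation{\calP_n}{X}$.

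The key calculation is bounding the variance. Since $X_i \ge 0$ and $X_i \le \log(1/F)$, I would use the elementary estimate $\variance{}{X_i} \le \expectation{}{X_i^2} \le \log(1/F)\cdot \expectation{}{X_i}$, so that $\sigma^2 \le \log(1/F)\cdot \expectation{\calP_n}{X}$. Plugging this into Bernstein's bound and cancelling a factor of $\expectation{\calP_n}{X}$ in the numerator and denominator gives
\begin{equation*}
\Prr{\calP_n}{X > (1+\tau)\expectation{\calP_n}{X}} \;\le\; \exp\!\parentheses{-\frac{\tau^2\, \expectation{\calP_n}{X}}{2\log(1/F)(1+\tau/3)}}.
\end{equation*}
Finally I would rewrite $\expectation{\calP_n}{X} = m\ln 2$ using the definition of $m$ to match the stated form.

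I do not expect any real obstacle here: both ingredients (the deterministic upper bound on $k_x$ and the i.i.d.\ structure of the $X_i$) are immediate from the construction and from Observation~\ref{obs:upperbound}. The only point requiring a small amount of care is the variance bound --- one must resist the temptation to use a Chernoff-style multiplicative bound, which would not expose the $\log(1/F)$ factor; using the second-moment bound $\expectation{}{X_i^2} \le M\cdot \expectation{}{X_i}$ is precisely what produces the $\log(1/F)$ in the denominator that the statement asks for.
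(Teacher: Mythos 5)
Your proposal is correct and follows essentially the same route as the paper: apply Bernstein's inequality with the almost-sure bound $X_i \le \log(1/F)$ from Observation~\ref{obs:upperbound}, the second-moment estimate $\variance{\calP_n}{X_i} \le \log(1/F)\cdot\expectation{\calP_n}{X_i}$, and $t = \tau\,\expectation{\calP_n}{X}$, then rewrite $\expectation{\calP_n}{X} = m\ln 2$. No gaps.
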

	\begin{proof}
		The random variables $\set{X_i}_i$ are independent and 
		$X_i \leq b \triangleq \log(1/\eps)$ for all $i$ by Obs.~\ref{obs:upperbound}. 
		We apply Bernstein's inequality~\cite{dubashipanconesi}:
		$$ \Prr{\calP_n}{X - \expectation{\calP_n}{X} > t} \leq \exp\parentheses{-\frac{t^2/2}{n \variance{\calP_n}{X_1} + bt/3} }\;.$$
		Note that
		$\variance{\calP_n}{X_i} \leq \expectation{\calP_n}{X_i^2} \leq b\cdot \expectation{\calP_n}{X_i}$.
		Setting $t= \tau \cdot  \expectation{\calP_n}{X} = \tau n \cdot \expectation{\calP_n}{X_1}$, we get that
		
		\begin{align*}
			\Prr{\calP_n}{X   >(1+\tau)\expectation{\calP_n}{X}} &\leq \exp\parentheses{-\frac{\tau^2}{2}\cdot \frac{n^2(\expectation{\calP_n}{X_1})^2}{nb\cdot \expectation{\calP_n}{X_1}+ \tau/3 \cdot n b \cdot \expectation{\calP_n}{X_1}} }\\
			& = \exp\parentheses{-\frac{\tau^2}{2(1+\tau/3)}\cdot \frac{n\expectation{\calP_n}{X_1}}{b} }   \;.
		\end{align*}
		The claim follows by noticing that $n\expectation{\calP_n}{X_1} = m\ln 2$.
	\end{proof}

	We can now prove that as long as $X \le (1 + 1/(2\log(1/\eps))) \cdot \expectation{\calP_n}{X} $, the Daisy Bloom filter is a $(\calQ,6 \eps)$-filter for $S$. We consider the fraction $\rho$ of entries in the array that are set to $0$ after we have inserted the elements of the set. We then show that $\rho$ is close to $1/2$ with high probability over the input set and the randomness of the hash functions. Conditioned on this, we then have that the probability that we make a mistake for $x$ is at most $2^{-k_x+1}$. The false positive rate is then derived similarly to that of Lemma~\ref{lem:fpr}.

	\begin{lemma}\label{boundfprate}
		Assume that $\calP_n$ and $\calQ$ satisfy $n\sum_{x\in\UU}p_xq_x \leq \eps$, and that $X \le (1 + 1/(2\log(1/\eps))) \expectation{\calP_n}{X}$.
		Then, whp, the Daisy Bloom filter is a $(\calQ,6\eps)$-filter on $S$.
	\end{lemma}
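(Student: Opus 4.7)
The plan is to decompose the false positive rate along the partition $\UU = \UU_0 \cup \UU_1 \cup \UU_2 \cup \UU_3 \cup \UU_4$ and bound each of the five contributions
\[
\sum_{x \in \UU_i \setminus S} q_x \cdot \Prr{A}{A(S,x) = \YES}
\]
separately, aiming for a small constant multiple of $F$ from each. The $k_x = 0$ classes are dispatched directly from their defining inequalities together with the standing hypothesis $n\sum_{x \in \UU} p_x q_x \le F$; the three remaining classes are handled via the usual Bloom filter estimate in terms of the filter occupancy.

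For $\UU_0$ and $\UU_1$ the filter always outputs \YES, so the contribution from $\UU_i$ is simply $\sum_{x \in \UU_i \setminus S} q_x$. For $\UU_0$, the defining inequality $q_x \le F p_x$ gives $\sum_{\UU_0} q_x \le F \sum_{\UU_0} p_x \le F$. For $\UU_1$, the defining inequality $p_x > 1/n$ yields $q_x < n p_x q_x$, and so
\[
\sum_{x \in \UU_1} q_x \;<\; n \sum_{x \in \UU_1} p_x q_x \;\le\; n \sum_{x \in \UU} p_x q_x \;\le\; F
\]
by the standing hypothesis. Together these two classes contribute at most $2F$.

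For $x \in \UU_2 \cup \UU_3 \cup \UU_4$, where $k_x \ge 1$, I plan to apply the standard Bloom filter estimate: given that $X$ hash positions have been set in an array of length $M = \DB(\calP_n, \calQ, F)$, the false positive probability for a query $x$ is at most $\rho^{k_x}$ with $\rho = 1 - (1-1/M)^X \le 1 - e^{-X/M}$. Since $\LB(\calP_n, \calQ, F/6) \ge \LB(\calP_n, \calQ, F) = \expectation{\calP_n}{X}$, we have $M \ge \log(e)\cdot \expectation{\calP_n}{X}$, and combined with the concentration assumption $X \le (1+\epsilon)\expectation{\calP_n}{X}$ for $\epsilon = 1/(2\log(1/F))$ this yields $X/M \le (1+\epsilon)\ln 2$ and hence $\rho \le 1 - 2^{-(1+\epsilon)}$. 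Using Observation~\ref{obs:upperbound} ($k_x \le \log(1/F)$, so $\epsilon k_x \le 1/2$), a short calculation will give $\rho^{k_x} \le C \cdot 2^{-k_x}$ for a modest constant $C$. Substituting the defining value of $2^{-k_x}$ in each class then telescopes cleanly: $\UU_2$ contributes at most $C F \sum_{\UU_2} p_x \le CF$ (using $2^{-k_x} = F p_x/q_x$), $\UU_3$ contributes at most $CF \sum_{\UU_3} q_x \le CF$ (using $2^{-k_x} = F$), and $\UU_4$ contributes at most $Cn \sum_{\UU_4} p_x q_x \le CF$ (using $2^{-k_x} = n p_x$ and the hypothesis).

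The main obstacle will be pinning down the constant $2 + 3C$ to be at most $6$. This is precisely where the factor $6$ in the definition $\DB(\calP_n, \calQ, F) = \log(e) \LB(\calP_n, \calQ, F/6)$ earns its keep: the extra slack ensures $\rho$ stays close enough to $1/2$ that $C$ can be kept within the required budget. Sharpening the comparison between $\LB(\calP_n, \calQ, F/6)$ and $\LB(\calP_n, \calQ, F)$, particularly in the regime where $\UU_4$ dominates $\LB$ and the two quantities nearly coincide, is the step most likely to require care, and may benefit from bounding $(1 - e^{-X/M})^{k_x}$ more directly as $e^{-k_x \cdot e^{-X/M}}$ rather than through the intermediate estimate $\rho \approx 1/2 + O(\epsilon)$.
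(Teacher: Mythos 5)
Your handling of $\UU_0$ and $\UU_1$, and the bookkeeping that converts a per-element bound of the form $C\cdot 2^{-k_x}$ into contributions $CF\sum_{\UU_2}p_x$, $CF\sum_{\UU_3}q_x$ and $Cn\sum_{\UU_4}p_xq_x$, matches the paper's accounting exactly (the paper reaches $6F$ with $C=2$ by grouping $\UU_0$ with $\UU_2$ via $\sum_x p_x\le 1$ and $\UU_1$ with $\UU_4$ via the hypothesis $n\sum_x p_xq_x\le F$, so your worry about ``$2+3C\le 6$'' evaporates). The genuine gap is the step you call the ``standard Bloom filter estimate.'' Conditioned on $X$, the false positive probability of a query $x\notin S$ equals $\expectation{A}{(1-\rho)^{k_x}\mid X}$, where $1-\rho$ is the \emph{realized} fraction of ones in the array; by Jensen's inequality this is at least $\left(1-(1-1/M)^X\right)^{k_x}$, so the quantity you propose to use is a \emph{lower} bound, not an upper bound, on the conditional false positive probability. (Separately, the inequality $1-(1-1/M)^X\le 1-e^{-X/M}$ is reversed, since $(1-1/M)^X\le e^{-X/M}$.) Turning the expected-occupancy heuristic into a valid upper bound is precisely the technical content that is missing. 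The paper does it by bounding the realized fraction of zeros $\rho$ directly: the bit-occupancy indicators are negatively associated, so a Chernoff bound applies (inequality~\eqref{conc}), and suitable slack parameters $\delta,\gamma$ (events $B_\delta$, $B_\gamma$, Appendix~\ref{app:deltagamma}) guarantee that, with high probability over the hash functions, $1-\rho\le 2^{1/\log(1/F)}\cdot\frac12$, whence $(1-\rho)^{k_x}\le 2\cdot 2^{-k_x}$ using Observation~\ref{obs:upperbound}. An alternative repair along your lines would be to invoke negative association to get the bound $\left(1-(1-1/M)^X\right)^{k_x}$ when the $k_x$ query positions are distinct, plus a correction term for collisions among them; but some such argument must be supplied, and as written your key inequality is unjustified and false in general.

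A secondary confusion: you place the burden of controlling $\rho$ on the factor $6$ in $\DB(\calP_n,\calQ,F)=\log(e)\LB(\calP_n,\calQ,F/6)$, but that factor plays no role inside this lemma. The paper proves the lemma with array length $m=\expectation{\calP_n}{X}/\ln 2=\log(e)\LB(\calP_n,\calQ,F)$, obtaining false positive rate $6F$, and the $F/6$ only enters afterwards, when Theorem~\ref{ub} is deduced by instantiating the construction at parameter $F/6$. So the ``extra slack'' you plan to lean on is not available for keeping $\rho$ near $1/2$; that must come from the hypothesis $X\le\left(1+1/(2\log(1/F))\right)\expectation{\calP_n}{X}$ together with the hash-side concentration argument described above.
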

	
	\begin{proof}
		Let $\rho \in [0,1]$ denote the fraction of entries in the Daisy Bloom filter that are set to $0$ after we have inserted the elements of the set.
		Recall that the random variable $X$ denotes the total number of entries that are set in the Bloom filter, including multiplicities.
		In the worst case, all the entries to the filter are distinct, and we have $X$ independent chances to set a specific bit to $1$. Therefore
		$$ \expectation{h}{\rho | X} \geq \parentheses{1-\frac{1}{m}}^X \approx e^{-X/m} = 2^{-X/\expectation{\calP_n}{X}}\;.$$
		
		Moreover, by applying a Chernoff bound for negatively associated random variables, we have that for any $0<\gamma<1$, 
		
		\begin{align}\label{conc}\Prr{A}{\rho \leq (1-\gamma)\cdot \parentheses{1-\frac{1}{m}}^X \,\middle\vert\, X} & \leq \exp\parentheses{-m \parentheses{1-\frac{1}{m}}^X \cdot \gamma^2/2}
		\end{align}
		
		We now let $B_\delta$ denote the event that $\parentheses{1-\frac{1}{m}}^X > (1-\delta)\cdot \frac{1}{2}$ and $B_{\gamma}$ the event that
		$\rho > (1-\gamma) \parentheses{1-\frac{1}{m}}^X $.
		We then choose  $0<\delta<1$ and $0<\gamma<1$ such that $B_\delta$ and $B_{\delta'}$ imply that 
		$$ \rho \geq 1- 2^{1/\log(1/\eps)}\cdot \frac{1}{2}\;.$$
		Moreover, for our choices of $\delta$ and $\gamma$, we have that both $B_\delta$ and $B_\gamma | B_\delta$ occur with high probability.\footnote{We implicitly assume here that $2^{1/\log(1/\eps)}\cdot \leq 2$, i.e., $\eps\leq 1/2$. Notice that this does not affect the overall result, since the false positive rate we obtain is $5\cdot \eps$.}
		We refer the reader to App.\ref{app:deltagamma} for $\delta$ and $\gamma$.
		Conditioned on $B_\delta$ and $B_\gamma$, we get that, for an $x\notin S$, since $k_x \leq \log(1/\eps)$,
		$$\Prr{A}{A(S,x) = \textsf{YES} | B_\delta \wedge B_\gamma} =(1-\rho)^{k_x}\leq 2^{k_x/b}\cdot 2^{-k_x} \leq 2\cdot 2^{-k_x} $$
		
		We bound the false positive rate on each partition.
		For $x\in \UU_0$, i.e.,  with $q_x\leq Fp_x$ and $k_x = 0$, we can upper bound the false positive rate  as such
		\begin{align*}
			\sum_{x\in \UU_0} q_x \cdot \Pr{A(S,x) = \textsf{YES}| B_\delta \wedge B_\gamma} &\leq  \sum_{x\in \UU_0} q_x \leq  \sum_{x\in \UU_0} p_x \cdot \eps \;.
		\end{align*} 
		
		For $x\in\UU_1$ with $p_x>1/n$ and $k_x=0$, we have  the following
		\begin{align*}
			\sum_{x\in \UU_1} q_x \cdot \Pr{A(S,x) = \textsf{YES}| B_\delta \wedge B_\gamma} &\leq  \sum_{x\in \UU_1} q_x < n\sum_{x\in \UU_1} p_xq_x \;.
		\end{align*} 
		
		For $x\in\UU_2$ with $k_x = \log(1/\eps \cdot q_x/p_x)$, we have the following
		\begin{align*}
			\sum_{x\in \UU_2} q_x \cdot \Pr{A(S,x) = \textsf{YES}   | B_\delta \wedge B_\gamma} &\leq  \sum_{x\in \UU_2} q_x \cdot 2\cdot 2^{-k_x} =  \sum_{x\in \UU_2} q_x \cdot 2\cdot p_x/q_x \cdot \eps\\
			& =  \sum_{x\in \UU_2} p_x \cdot 2\eps \;.
		\end{align*} 
		
		For $x\in\UU_3$ with $k_x = \log(1/\eps)$,
		\begin{align*}
			\sum_{x\in \UU_3} q_x \cdot \Pr{A(S,x) = \textsf{YES}   | B_\delta \wedge B_\gamma} &\leq  \sum_{x\in \UU_3} q_x \cdot 2\eps  \leq 2\eps\;.
		\end{align*} 
		
		For $x\in\UU_4$ with $k_x = \log(1/(np_x))$,
		\begin{align*}
			\sum_{x\in \UU_4} q_x \cdot \Pr{A(S,x) = \textsf{YES}   | B_\delta \wedge B_\gamma} &\leq  \sum_{x\in \UU_4} q_x \cdot 2 np_x  =  2n \sum_{x\in \UU_4} p_x q_x \;.
		\end{align*} 
		
		For the overall false positive rate, note that the total false positive rate in $\UU_0$ and $\UU_2$ is at most $2\eps$. Similarly for the false positive rate in $\UU_3$.  For the remaining partitions $\UU_1$ and $\UU_4$, we have that it is at most
		$$ 2n\sum_{x\in \UU_1 \cup \UU_4} p_xq_x\;.$$
		From our assumption, this later term is at most $2\eps$ as well.
		
	\end{proof}
	Combining the above with Lemma~\ref{claim:Xconcentrated} we get that with probability $1 - \exp\parentheses{-\frac{m}{\Theta(\log^3(1/\eps))} }$ over the randomness of the input set, the Daisy Bloom filter is a $(\calQ,6\eps)$-filter for $S$.
	This is exactly the statement of Thm.~\ref{ub}.

	\bibliography{./references}
	\appendix

	\section{Choices for \texorpdfstring{$\delta$ and $\gamma$}{delta and gamma} in the proof of Lemma~\texorpdfstring{\ref{boundfprate}}{Lemma \ref{boundfprate}} }\label{app:deltagamma} 
	In the proof of Lemma~\ref{boundfprate}, we define the event
	$B_\delta$ to be when $ \parentheses{1-\frac{1}{m}}^X>(1-\delta)\cdot \frac{1}{2}$ and the event $B_{\gamma}$  to be when
	$\rho > (1-\gamma) \parentheses{1-\frac{1}{m}}^X $.
	We then claim that we can choose values $0<\delta<1$ and $0<\gamma<1$ such that $B_\delta$ and $B_{\delta'}$ occur with high probability and imply that 
	$$ \rho \geq 1- 2^{1/\log(1/\eps)}\cdot \frac{1}{2}\;.$$
	In this section, we show such possible values for $\delta$ and $\gamma$ and bound the probability of $B_{\delta}$ and $B_\gamma|B_\delta$.
	Specifically, we let $b = \log(1/\eps)$ and define $\delta  \triangleq 2^{1/(2b)}-1$ and $\gamma\triangleq (1-2^{-1/(2b)})\cdot 2^{1/b-1}$. Then $B_\delta \wedge B_\gamma$ imply that
	\begin{align*}
		\rho &> (1-\gamma)\cdot (1-\delta) \frac{1}{2} = (1-\gamma)\cdot \parentheses{2-2^{1/(2b)}}\frac{1}{2} = (1-\gamma)\cdot \parentheses{1-2^{1/(2b)-1}}\\
		& = (1-\gamma)\cdot \parentheses{1-2^{-1/(2b)} \cdot 2^{1/b-1}} \\
		&\geq 1-(\gamma/2^{1/b-1} + 2^{-1/(2b)} ) \cdot 2^{1/b-1} = 1- 2^{1/b-1} \;.
	\end{align*}
	We now show that $B_\delta$ and $B_\gamma| B_\delta$ occur with high probability. 
	
	Let $\tau= 1/(2b)=1/(2\log(1/\eps))$ and note that if $X   \leq(1+\tau)\cdot \expectation{\calP_n}{X}$, we get that for $m\geq (2b)^4$:
	\begin{align*}
		\parentheses{1-\frac{1}{m}}^X&\geq \parentheses{\parentheses{1-\frac{1}{m}} \cdot \frac{1}{e}}^{X/m} \geq \parentheses{1-\frac{1}{m}}^{(1+\tau)\ln 2}\cdot 2^{-1-\tau}\\
		&\geq \parentheses{1-\frac{1}{m}}^{2}\cdot 2^{-1-\tau} \geq \parentheses{1-\frac{2}{m}}\cdot 2^{-1-\tau} \geq \parentheses{1-\frac{2}{(2b)^4}}\cdot 2^{-1-\tau} \\
		&\geq (1-\delta)\cdot 2^{\tau} \cdot 2^{-1-\tau} = (1-\delta)\frac{1}{2}\;,
	\end{align*}
	where the inequality $ 1-2/(2b)^4 \geq (1-\delta)\cdot 2^{\tau}  = (2-2^{1/(2b)})\cdot 2^{1/(2b)}$ follows from the fact that $1-2/x^4 > (2-2^{1/x})\cdot 2^{1/x}$ for $x=2b>2$.
	In other words, we have that $X  \le (1+\tau)\cdot \expectation{\calP_n}{X}$ implies $B_\delta$, and since we assume that $X  \le (1+\tau)\cdot \expectation{\calP_n}{X}$ holds then $B_\delta$ is also true.
	For bounding the probability of $B_\gamma | B_\delta$, we have that:
	\begin{align}\Prr{A}{\neg B_\gamma\middle\vert\, B_\delta} & \leq \exp\parentheses{-m (1-\delta)\gamma^2/4} \leq\exp\parentheses{-m (1-\delta)\gamma^2/4} \leq \exp\parentheses{-\Theta(\gamma^2m)}\;,
	\end{align}
	where $1-\delta = 2-2^{1/2b}\geq 2-\sqrt{2}$ since $b>1$.
	We then get that
	\begin{align*}
		2\gamma &= (1-2^{-1/(2b)})\cdot 2^{1/b} > 1/(2b)^2\;,
	\end{align*}
	since $(1-2^{-1/x})\cdot 2^{2/x}>1/x^2$ for all $x=2b>0$.
	Finally, we have that:
	\begin{align}\Prr{A}{B_\gamma\middle\vert\, B_\delta} & \geq  1 - \exp\parentheses{-\frac{m}{\Theta(\log^4(1/\eps))} }\;.
	\end{align}

\end{document}